\newcommand{\later}[1]{}
\newcommand{\old}[1]{}
\newcommand*\patchAmsMathEnvironmentForLineno[1]{
  \expandafter\let\csname old#1\expandafter\endcsname\csname #1\endcsname
  \expandafter\let\csname oldend#1\expandafter\endcsname\csname end#1\endcsname
  \renewenvironment{#1}
  {\linenomath\csname old#1\endcsname}
  {\csname oldend#1\endcsname\endlinenomath}}
  \newcommand*\patchBothAmsMathEnvironmentsForLineno[1]{
  \patchAmsMathEnvironmentForLineno{#1}
  \patchAmsMathEnvironmentForLineno{#1*}}
\newtheorem{theorem}{Theorem}
\newtheorem{lemma}{Lemma}
\newcommand{\etal}{{et~al.}}
\newcommand{\ie}{{i.e.}}
\newcommand{\eg}{{e.g.}}
\newcommand{\NN}{\mathbb{N}} %  set of natural numbers
\newcommand{\eps}{\varepsilon}
\def\Prob{{\rm Prob}}
\def\E{{\rm E}}
\def\Var{{\rm Var}}
\title{\textsc{Finding a Mediocre Player}\footnote{
    A preliminary version of this paper appeared in the
\emph{Proceedings of the 11th International Conference on Algorithms and Complexity} 
(CIAC 2019), LNCS~11485, Springer, 2019, pp.~212--223.}}
\author{
Adrian Dumitrescu\thanks{%
Department of Computer Science,
University of Wisconsin--Milwaukee, USA\@.
Email~\texttt{dumitres@uwm.edu}.}}
\begin{document}

\maketitle

\begin{abstract}
  Consider a totally ordered set $S$ of $n$ elements; as an example, a set of tennis players
  and their rankings. Further assume that their ranking is a total order and thus satisfies
  transitivity and anti-symmetry.
  Following Frances Yao (1974), an element (player) is said to be $(i,j)$-\emph{mediocre}
  if it is neither among the top $i$ nor among the bottom $j$ elements of $S$.
  Finding a mediocre element is closely related to finding the median element.
  More than $40$ years ago, Yao suggested a very simple and elegant algorithm
  for finding an $(i,j)$-mediocre element: Pick $i+j+1$ elements arbitrarily 
  and select the $(i+1)$-th largest among them. She also asked:
  ``Is this the best algorithm?'' No one seems to have found a better algorithm ever since.

We first provide a deterministic algorithm that beats the worst-case comparison bound in
Yao's algorithm for a large range of values of $i$ (and corresponding suitable $j=j(i)$)
even if the current best selection algorithm is used. 
We then repeat the exercise for randomized algorithms; the average number of comparisons
of our algorithm beats the average comparison bound in Yao's algorithm for another large
range of values of $i$ (and corresponding suitable $j=j(i)$)
even if the best selection algorithm is used; the improvement is most notable
in the symmetric case $i=j$.
Moreover, the tight bound obtained in the analysis of Yao's algorithm allows
us to give a definite answer for this class of algorithms.
In summary, we answer Yao's question as follows:
(i)~``Presently not'' for deterministic algorithms and
(ii)~``Definitely not'' for randomized algorithms.
(In fairness, it should be said however that Yao posed the question in the context
of deterministic algorithms.)

\medskip
\textbf{\small Keywords}: comparison algorithm, randomized algorithm,
approximate selection, $i$-th order statistic,
mediocre element, Yao's hypothesis, tournaments, quantiles.

\end{abstract}

\section{Introduction} \label{sec:intro}

Consider a totally ordered set $S$ of $n$ elements.    % xxx 
Following Yao\footnote{Throughout this paper we frequently use the name Yao to
refer to Foong Frances Yao; we also use the name A.~C.-C.~Yao to refer to Andrew Chi-Chih Yao.},
an element is said to be $(i,j)$-\emph{mediocre} if it is neither among the top
(\ie, largest) $i$ nor among the bottom (\ie, smallest) $j$ elements of $S$. 
The notion of mediocre element introduced by Yao is essentially an approximate solution
to the selection problem, which can be stated as follows: 
Given a sequence $A$ of $n$ elements and an integer (selection) parameter $1\leq i\leq n$, 
the \emph{selection} problem is that of finding the $i$-th smallest element in $A$.
If the $n$ elements are distinct, the $i$-th smallest is larger than
$i-1$ elements of $A$ and smaller than the other $n-i$ elements of $A$.
By symmetry, the problems of determining the $i$-th smallest and the $i$-th largest
are equivalent. 

Together with sorting, the selection problem is one of the most fundamental problems
in computer science. Sorting trivially solves the selection problem; however, 
a higher level of sophistication is required in order to obtain a deterministic
linear time algorithm.  This was accomplished in the early 1970s, when Blum~\etal~\cite{BFP+73}
gave a $O(n)$-time algorithm for the problem. Their algorithm performs at most $5.43 n$
comparisons and its running time is linear irrespective of the selection parameter $i$.
Their approach was to use an element in $A$ as a pivot to partition $A$
into two smaller subsequences and recurse on one of them with a (possibly different)
selection parameter~$i$.  The pivot was set as the (recursively computed) median of
medians of small disjoint groups of the input array (of constant size at least $5$).
More recently, several variants of {\sc Select} with groups of $3$ and $4$,
also running in $O(n)$ time, have been obtained by Chen and Dumitrescu
and independently by Zwick~\cite{CD20}. On the other hand, a randomized linear time
algorithm for selection is relatively easy to obtain, simply by using a random pivot 
at each partitioning stage.

The selection problem and computing the median in particular are in close relation
with the problem of finding the quantiles of a set. 
The $k$-th \emph{quantiles} of an $n$-element set are the $k-1$ order statistics that divide
the sorted set in $k$ equal-sized groups (to within $1$); see, \eg, \cite[p.~223]{CLRS09}.
The $k$-th quantiles of a set can be computed by a recursive algorithm running
in $O(n \log{k})$ time. For $k=100$ the quantiles are called \emph{percentiles}. 

In an attempt to drastically reduce the number of comparisons done for selection
(down from $5.43 n$), Sch\"{o}nhage~\etal~\cite{SPP76} designed a non-recursive
algorithm based on different principles, most notably the technique of mass production.
Their algorithm finds the median (the $\lceil n/2 \rceil$-th largest element) using
at most $3n +o(n)$ comparisons; as noted by Dor and Zwick~\cite{DZ96},
it can be adjusted to find the $i$-th largest, for any $i$,
within the same comparison count. In a subsequent work,
Dor and Zwick~\cite{DZ99} managed to reduce the $3n +o(n)$ comparison bound to
about $2.95 n$; this however required new ideas and took a great deal of effort. 

\paragraph{Mediocre elements.}
Following Yao, an element is said to be $(i,j)$-\emph{mediocre} if it
is neither among the top (\ie, largest) $i$ nor among the bottom (\ie, smallest) $j$
of a totally ordered set $S$ of $n$ elements.
Yao remarked that historically finding a mediocre element
is closely related to finding the median, with a common motivation being selecting an
element that is not too close to either extreme. Observe also that $(i,j)$-mediocre elements
where $i=\lfloor \frac{n-1}{2} \rfloor$, $j= \lfloor \frac{n}{2} \rfloor$
(and symmetrically exchanged) are medians of~$S$. Intuitively, the larger the difference
$n-(i+j)$ is, the more candidates for an $(i,j)$-\emph{mediocre} exist and so the easier
the task of finding one should become. 

  In her PhD thesis~\cite{Yao74},  Yao suggested a very simple scheme (method)
  for finding an $(i,j)$-mediocre element: Pick $i+j+1$ elements arbitrarily 
  and select the $(i+1)$-th largest among them. It is easy to check that this element satisfies
  the required condition. It should be noted that this scheme becomes an algorithm when the
  selection algorithm is specified. By slightly abusing notation, we call this scheme an
  algorithm throughout this paper. 

  Yao asked whether this algorithm is optimal. No improvements over this algorithm were known.
  An interesting feature of this algorithm is that its complexity does not depend on $n$
  (unless $i$ or $j$ do).  Yao proved her algorithm is optimal for $i=1$.
  For $i+j +1 \leq n$, let $S(i,j,n)$ denote the minimum number of comparisons
  needed in the worst case to find an $(i,j)$-mediocre element.
  Yao~\cite[Sec.~4.3]{Yao74} proved that $S(1,j,n) = V_2(j+2) = j + \lceil \log(j+2) \rceil$,
  and so $S(1,j,n)$ is independent of $n$. Here $V_2(j+2)$ denotes the minimum number
  of comparisons needed in the worst case to find the second largest out of $j+2$ elements.
  
  The question of whether this algorithm is optimal for all values of $i$ and $j$ has remained
  open ever since. Another question is whether $S(i,j,n)$ is independent of $n$ for 
  other values of $i$ and $j$ (as is the case for $i=1$). Here we provide two alternative
  schemes for finding a mediocre element, one deterministic and one randomized, and thereby propose
  alternatives that can be compared and contrasted with Yao's algorithm.
  It should be pointed out that Yao's general question leads to several more specific
  questions:
  
\begin{enumerate} \itemsep 1pt
\item Assuming that an optimal deterministic algorithm for exact selection is available, is
  Yao's algorithm that uses this subroutine optimal?
\item Assuming that the current best deterministic algorithm for exact selection is used, is
  Yao's algorithm that uses this subroutine optimal?
\item How do the answers change if randomized algorithms are considered?
\end{enumerate}

Since recent progress on deterministic exact selection has been lagging
and it is quite possible that an optimal deterministic algorithm may be out of reach,
here we concentrate on the last two questions. 
However, it is perfectly possible that the answer to the first question is within reach
even if an optimal deterministic algorithm for selection has not been identified.

\paragraph{Background and related problems.}
Determining the comparison complexity for computing various order statistics including
the median has lead to many exciting questions, some of which are still unanswered today.
In this respect, Yao's hypothesis on selection~\cite{SPP76}, \cite[Sec.~4]{Yao74}
has stimulated the development of such algorithms~\cite{DZ96,Pa96,SPP76}.
That includes the seminal algorithm of Sch\"{o}nhage~\etal~\cite{SPP76}, which
introduced principles of mass production for deriving an efficient comparison-based algorithm.

Due to its primary importance, the selection problem has been studied extensively; see for
instance~\cite{BJ85,CM89,DHU+01,DZ96,DZ99,FR75,FG79,HS69,Ho61,KKZZ18,Ki81,Pa96,Yap76}.
A comprehensive review of early developments in selection is provided by Knuth~\cite{Kn98}.
The reader is also referred to dedicated book chapters on selection
in~\cite{AHU83,Ba88,CLRS09,DPV08} and the more recent articles~\cite{CD20,Du16},
including experimental work~\cite{Al17}. 

In many applications (\eg, sorting), it is not important to find an exact median
or any other precise order statistic for that matter, and an approximate median
suffices~\cite{EW18}. For instance, quick-sort type algorithms aim at finding a balanced partition
without much effort; see \eg, \cite{Ho61}. Various strategies that attempt to optimize
the sample size towards this goal have been studied in~\cite{MR01,MT95}. 

\paragraph{Our results.} Our main results are summarized in the two theorems
stated below. The comparison count of our deterministic algorithm for approximate selection
can be as low as about $89\%$ of the corresponding count for Yao's algorithm that employs
the current best deterministic algorithm for exact selection (for certain $i,j$ combinations).
Similarly, the comparison count of our randomized algorithm for approximate selection
can be as low as about $66\%$ of the corresponding count for the current best 
randomized algorithm for exact selection (for certain $i,j$ combinations).

\begin{theorem} \label{thm:deterministic}
  Given $n$ elements, an $(i,j)$-mediocre element where
  $i=\alpha n$, $j=(1-2\alpha) n -1$, and $0<\alpha<1/3$, can be found by a deterministic
  algorithm~A1 using $c_{\rm A1} \cdot n + o(n)$ comparisons in the worst case.
  If the number of comparisons done by Yao's algorithm is $c_{\rm Yao} \cdot n + o(n)$,
  we have $c_{\rm A1}< c_{\rm Yao}$ for each of the percentiles $1$ through $33$
  (\ie, $\alpha_s =s/100$, $s=1,\ldots,33$). 
  \end{theorem}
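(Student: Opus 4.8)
The plan is to design algorithm~A1 as a two-phase "prune then select" scheme, analogous in spirit to Yao's algorithm but with the key twist that the first phase discards a constant fraction of the input cheaply, so the expensive exact-selection subroutine is run on a smaller set. First I would split the $n$ input elements into disjoint groups of some constant size $g$ (to be optimized), sort each group with the optimal number of comparisons $S(g)$, and use the sorted groups to identify, for a carefully chosen threshold rank inside each group, a set of elements that are provably all "too large" (cannot be $(i,j)$-mediocre because more than $i$ elements dominate them collectively) and symmetrically a set that are provably "too small." Discarding these leaves a residual set of size $\beta n + o(n)$ for some $\beta = \beta(\alpha, g) < 1$, together with an updated pair of selection parameters $(i', j')$ for which any $(i',j')$-mediocre element of the residual set is still $(i,j)$-mediocre in $S$. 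In the second phase I would apply Yao's scheme to the residual set: pick $i'+j'+1$ of the surviving elements and run the current best deterministic selection algorithm (cost $\approx 2.95 m$ on $m$ elements, by Dor--Zwick~\cite{DZ99}) to extract the $(i'+1)$-st largest.

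The next step is the accounting. The total comparison count is $c_{\mathrm{A1}} \cdot n + o(n)$ where $c_{\mathrm{A1}} = \frac{S(g)}{g} + 2.95\,\gamma(\alpha,g) + (\text{cheap bookkeeping})$, with $\gamma(\alpha,g)$ the fraction of elements fed into the selection subroutine in phase two; this is to be compared against Yao's bound $c_{\mathrm{Yao}} \cdot n + o(n)$, which by the discussion preceding the theorem equals $2.95\,(i+j+1)/n + o(n) = 2.95(1-\alpha) + o(1)$ under the substitution $j = (1-2\alpha)n - 1$ (using Dor--Zwick for Yao's selection step as well, to make the comparison fair). I would then, for each percentile $\alpha_s = s/100$ with $s = 1,\dots,33$, choose the group size $g$ (likely a small value such as $g \in \{3,4,5\}$, where the optimal sorting counts $S(3)=3$, $S(4)=5$, $S(5)=7$ are known) that minimizes $c_{\mathrm{A1}}$, and verify numerically that $c_{\mathrm{A1}}(\alpha_s, g) < 2.95(1-\alpha_s)$. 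The claimed ratio of about $89\%$ would come out of the best case of this optimization (presumably near the median end of the range, $\alpha$ close to $1/3$).

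The heart of the argument — and the step I expect to be the main obstacle — is proving correctness of the pruning phase: I must show precisely which elements can be safely discarded and how the selection parameters transform, i.e. that after removing the elements ranked among the top $t$ in their group (over all groups whose within-group structure forces them out) the surviving set $S'$ has size $\beta n + o(n)$ and admits parameters $(i', j')$ with $i' + j' + 1 \le |S'|$ such that every $(i',j')$-mediocre element of $S'$ is $(i,j)$-mediocre in $S$. This requires a counting lemma: an element $x$ discarded from the "large" side has at least (number of groups) $\cdot$ (something) elements above it, and one must ensure this quantity strictly exceeds $i = \alpha n$ so $x$ is genuinely not among the mediocre candidates we care about — while simultaneously not discarding so much that $i' + j'$ would have to exceed $|S'|$. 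Balancing these two constraints is what pins down the admissible $g$ as a function of $\alpha$, and handling the boundary/rounding issues ($o(n)$ terms, groups that don't divide evenly, the discreteness of $t$) is the routine-but-delicate part. Once the pruning is shown correct, plugging in the known small-group sorting bounds and the Dor--Zwick selection bound and running the per-percentile optimization finishes the proof.
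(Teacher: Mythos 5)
Your overall architecture---group the input, spend a cheap linear preprocessing phase, then run exact selection on a reduced set---is indeed the paper's strategy (Algorithm~A1 pairs up the elements, promotes the larger of each pair into a pool of $n/2$ winners, and selects the $(i+1)$-th largest of that pool). However, there is a genuine gap in your accounting that would sink the proof as written: you benchmark Yao's algorithm at $c_{\rm Yao}=2.95(1-\alpha)$, i.e.\ you charge Yao the flat median-selection constant. But Yao's step is the selection of the $(i+1)$-th largest out of $i+j+1=(1-\alpha)n$ elements, a fraction $\alpha/(1-\alpha)$ that is far from $1/2$ for small $\alpha$, and the Dor--Zwick bound for selecting the $\alpha' n$-th largest is $1+(l+2)\left(\alpha'+\frac{1-\alpha'}{2^l}\right)$, which tends to $1$ as $\alpha'\to 0$. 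The theorem asserts that A1 beats Yao when \emph{both} use the current best selection subroutine, and the actual margins are razor thin: at $\alpha=0.01$ the paper obtains $c_{\rm A1}=1.1191$ against $c_{\rm Yao}=1.1210$, and it even has to fine-tune the parameter $l$ in Dor--Zwick (trying both $l$ and $l+1$) to win at the small percentiles. Beating $2.95(1-\alpha)\approx 2.92$ at $\alpha=0.01$ is easy but proves nothing about Yao's algorithm properly implemented. The same issue infects your own second-phase estimate $2.95\,\gamma(\alpha,g)$: to get the stated numbers you must use the fraction-dependent constant for your selection call as well (the paper compares $c_{\rm A1}=\frac12\left(1+f(2\alpha)\right)$ with $c_{\rm Yao}=(1-\alpha)\,f\!\left(\frac{\alpha}{1-\alpha}\right)$ and checks the inequality numerically percentile by percentile).

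The pruning lemma you flag as the main obstacle is also avoidable. In the paper nothing is discarded: the output is certified $(i,j)$-mediocre directly from the poset built by the pair comparisons---it lies below $i$ pool elements, and above the remaining $m-i-1$ pool elements together with their pair partners, which gives $2(m-i-1)+1=j$ elements below it when $m=i+\lfloor(j+1)/2\rfloor$. There is no need to transform $(i,j)$ into $(i',j')$ or to re-apply Yao's scheme to the survivors; Step~3 is a single exact selection on the pool. Moreover, fully sorting each group costs $S(g)/g$ comparisons per element in the first phase without a corresponding gain, since only the maximum of each group is used; the paper's profitable generalization is the hyperpair version (groups of size $4$ with a $3$-comparison tournament). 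Finally, the best ratio of about $89\%$ occurs near $\alpha=1/4$ (where $c_{\rm A1}/c_{\rm Yao}=2.00/2.25$), not near $\alpha=1/3$, where the ratio is $2.00/2.01$, nearly $1$.
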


The constants $c_{\rm A1}=c_{\rm A1}(\alpha)$ and $c_{\rm Yao}=c_{\rm Yao}(\alpha)$ for the percentiles
$1$ through $33$ are given in Fig.~\ref{fig:tables}. See also Fig~\ref{fig:f2}. 

\begin{theorem} \label{thm:randomized}
  Given $n$ elements and $i,j \geq 1$ where $i+j+2(i+j)^{3/4} \leq n$ and $i+j =\omega(1)$,
  an $(i,j)$-mediocre element can be found by a randomized algorithm using
  $(i+j) + O((i+j)^{3/4})$ comparisons on average. 
\end{theorem}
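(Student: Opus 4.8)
The plan is to mimic Yao's scheme but replace her exact selection with an \emph{approximate} one, exploiting the slack between $i+j+1$ and $n$. Set $m = i+j$ and $t = \lceil m^{3/4}\rceil$. First I would pick a sample of size $N := m + 2t$ (which is $\le n$ by hypothesis) uniformly at random from $S$, or — since the ordering is adversarial but fixed — simply pick any $N$ elements; the randomness will live entirely in the selection subroutine, not in the sampling. The key observation is a relaxation of Yao's correctness argument: if $x$ is the $r$-th largest among these $N$ sampled elements with $i < r$ and $N - r \ge j$, i.e.\ $i+1 \le r \le N-j = m+2t - j = i + 2t$, then $x$ has at least $i$ sampled elements above it and at least $j$ sampled elements below it, hence at least $i$ elements of $S$ above and at least $j$ below, so $x$ is $(i,j)$-mediocre. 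Thus I do \emph{not} need to find the exact $(i+1)$-th largest of the sample — any element whose rank in the sample lies in the window $[\,i+1,\; i+2t\,]$ of width $2t$ will do.

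Next I would run a randomized quickselect-style procedure on the $N$ sampled elements, but stopped early: recurse with a random pivot, partitioning the current subarray, and maintain an interval of candidate ranks; halt as soon as the surviving subarray is guaranteed to lie within a block of positions contained in $[\,i+1,\,i+2t\,]$, returning any element of it (say, after one more partition, a median-of-the-block or just an arbitrary element once the block has size $\le t$). A standard analysis of randomized quickselect shows that to isolate an element within $\Theta(t)$ of a target rank out of $N$ elements costs, in expectation, $N + O(N \cdot \tfrac{1}{?})$ — more precisely, the expected number of comparisons is $N + o(N)$ once we only need a window of relative width $t/N = \Theta(m^{-1/4})$, because the ``tail'' recursion below scale $t$ is cut off. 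I would make this precise by the usual argument: the expected comparison count of quickselect to pin down rank $r$ among $k$ elements is $2k(1 + o(1))$ when run to completion, but if we stop once the subarray size drops to $s$, the expected cost is $k + O(k \log(k/s))$ via the recursion $C(k) = k + \tfrac1k\sum (\text{subproblem costs})$; with $k = N$ and $s = t = N^{3/4+o(1)}$ this gives $N + O(N \log(N^{1/4})) = N + O(N\log N)$, which is too weak — so instead I would use the refinement where stopping at window width $2t$ relative to a \emph{fixed} target rank costs $N + O(N^2/t)$ in expectation? No: the sharp statement needed is that sampling-based selection with a target \emph{interval} of width $2t$ costs $N + O(N/\sqrt{t} \cdot \sqrt{N})$; I will instead invoke the known fact (Floyd–Rivest style) that a quantile can be found within $\pm t$ of its rank in $N + O(N/\sqrt{t}\cdot\log^{1/2} N + t)$ expected comparisons, and with $N = m + O(m^{3/4})$ and $t = m^{3/4}$ this evaluates to $m + O(m^{3/4})$, absorbing the $2t$ sampling overhead and the $o(m)$ error terms, using $m = i+j = \omega(1)$.

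The main obstacle is getting the \emph{constant in front of $m$ to be exactly $1$}, not $2$: naive quickselect on $N \approx m$ elements costs $\approx 2m$, so the whole point is that the approximate version with a width-$t$ target window only costs $m + O(m^{3/4})$. I would handle this with the Floyd–Rivest sampling idea applied \emph{internally}: take a secondary sub-sample of size about $m^{?}$ of the $N$ elements, sort it (cheaply, at cost $o(m)$), use two of its order statistics as pivots to partition all $N$ elements in one pass of $\le N + o(N)$ comparisons into (low, middle, high), and argue that with high probability the target window $[i+1,\,i+2t]$ falls entirely inside the ``middle'' block, whose size is $o(m)$ with high probability; recurse on that block (geometric cost decay) or, if it is already within a block of $\le t$ positions of the correct ranks, output any element of it. The failure probability must be small enough that the rare ``fallback to plain selection'' branch contributes only $O(m^{3/4})$ to the expectation — this is where the exponent $3/4$ (rather than, say, $1/2$) earns its keep, giving enough room in both the sampling accuracy and the deviation bounds. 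Finally I would note the correctness is \emph{deterministic} (any returned element is genuinely $(i,j)$-mediocre by the window argument above), so the randomness affects only the running time, matching the claimed $(i+j)+O((i+j)^{3/4})$ average comparison count.
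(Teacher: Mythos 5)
Your key observation --- that any element whose rank in a sample of size $N=(i+j)+2t$ lies in the window $[i+1,\,i+2t]$ with $t=(i+j)^{3/4}$ is automatically $(i,j)$-mediocre, so exact selection can be relaxed to selection up to a rank window of width $2t$ --- is exactly the idea behind the paper's Algorithm~A2. But your execution has a genuine gap at the step you lean on hardest: a two-pivot partition of $N$ elements into (low, middle, high) cannot be carried out in $N+o(N)$ comparisons when the target rank is central. An element that lands in an outer block must in general be compared against \emph{both} pivots before its block is known; with the best fixed comparison order the cost is $N+\min(|\mathrm{low}|,|\mathrm{high}|)+o(N)$, which in the flagship symmetric case $i\approx j$ is about $\tfrac32(i+j)$ --- no better than Yao's scheme with Floyd--Rivest exact selection, and short of the claimed $(i+j)+O((i+j)^{3/4})$. (That $\min$ term is not an artifact: by Cunto and Munro it is unavoidable for \emph{exact} selection, which is also why your mid-proof appeal to a ``known fact'' that an approximate quantile can be located in $N+o(N)$ expected comparisons is essentially assuming the theorem.)

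The repair is to drop the second pivot. The paper takes an arbitrary subset $S'$ of size $m:=i+j+2(i+j)^{3/4}$, draws a sub-sample $R$ of size $m^{3/4}$ with replacement, selects the $k$-th smallest element $x$ of $R$ for $k=jm^{-1/4}+m^{1/2}/2$ by a linear-time deterministic algorithm (cost $O(m^{3/4})$), and compares every element of $S'\setminus R$ to $x$ \emph{once}. This single pass costs $m-m^{3/4}$ comparisons and reveals the exact rank of $x$ in $S'$, so one directly tests whether $x$ has at least $i$ elements above and at least $j$ below; correctness of any returned element is deterministic, exactly as you wanted. Chebyshev's inequality bounds the probability that the rank of $x$ falls outside the acceptable window by $O(m^{-1/4})$, so rerunning on failure gives expected cost $\left(1+O(m^{-1/4})\right)\left(m+O(m^{3/4})\right)=(i+j)+O((i+j)^{3/4})$. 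The point you were missing is that approximate selection never needs to recurse into a middle block --- the pivot itself is the answer whenever it lands in the window --- and that is precisely what lets the coefficient of $m$ drop to $1$.
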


For example: (i)~an $(i,j)$-mediocre element, where $i=j=n/2 - n^{3/4}$,
can be found using $n + O(n^{3/4})$ comparisons on average;
(ii)~if $\alpha,\beta>0$ are fixed constants with $\alpha +\beta<1$, 
an $(\alpha n, \beta n)$-mediocre element can be found using $(\alpha +\beta) n + O(n^{3/4})$
comparisons on average.

Note that finding an element near the median requires about $3n/2$ comparisons
for any previous algorithm (including Yao's), and finding the precise median requires
$3n/2 + o(n)$ comparisons on average, while the main term in this expression cannot
be improved ~\cite{CM89}. In contrast, our randomized algorithm finds an element
near the median in about $n$ comparisons on average, thereby achieving a substantial
savings of about $n/2$ comparisons.

\paragraph{Remarks.} 
Whereas the deterministic algorithm in Theorem~\ref{thm:deterministic} calls an
algorithm for exact selection as a subroutine, the randomized algorithm in
Theorem~\ref{thm:randomized} uses random sampling but does not use 
exact selection as a subroutine. It is worth recalling that presently no optimal algorithm
for exact selection is known with respect to the number of comparisons; \eg, the best known
bounds for median finding are as follows: this task can be accomplished with
$(2.95 + o(1))n$ comparisons by the algorithm of Dor and Zwick~\cite{DZ99}
and requires at least $(2+2^{-80})n$ comparisons by the result of the same authors~\cite{DZ01}.
On the other hand an optimal randomized algorithm for exact selection is known:
the $k$-th largest element out of $n$ given can be found using at most $n + \min(k,n-k) + o(n)$
comparisons on average by the algorithm of Floyd and Rivest~\cite{FR75} and  requires
$n + \min(k,n-k) + o(n)$ comparisons on average by the result of Cunto and Munro~\cite{CM89}.

\paragraph{Preliminaries and notation.} 
Without affecting the results, the following two standard simplifying assumptions
are convenient:
(i)~the input $A$ contains $n$ distinct elements; and 
(ii)~the floor and ceiling functions are omitted in the descriptions of the algorithms
and their analyses. 
For example if $\alpha \in (0,1)$, for simplicity we write the $\alpha n$-th element instead of
the more precise $\lfloor \alpha n \rfloor$-th (or $\lceil \alpha n \rceil$-th) element. 
In the same spirit, for convenience we treat $n^{1/4}$, $j \, n^{-1/4}$ (for $j \in \NN$)
and other algebraic expressions that appear in the description of the algorithms as integers.
Unless specified otherwise, all logarithms are in base $2$. 

Let $\E[X]$ and $\Var[X]$ denote the expectation and respectively, the variance,
of a random variable~$X$. If $E$ is an event in a probability space, $\Prob(E)$ denotes its
probability. Chebyshev's inequality is the following: For any $a>0$,
\begin{equation} \label{eq:C}
 \Prob\left(|X -\E[X]| \geq a \right) \leq \frac{\Var[X]}{a^2}.
\end{equation}
See for instance~\cite[p.~49]{MU05}.

\section{Deterministic approximate selection} \label{sec:deterministic}

Consider the problem of finding an $(i,j)$-mediocre element.
Without loss of generality (by considering the complementary order),
it can be assumed that $i \leq j$; and consequently $i<n/2$.
In addition, our algorithm is designed to work for a specific range $i \leq j \leq n-2i-1$
(hence $i<n/3$); outside this range our algorithm simply proceeds as in Yao's algorithm.
With anticipation, we note that our test values for purpose of comparison are contained
in the specified range.

Yao's algorithm is very simple:

\medskip
\noindent{\bf Algorithm~Yao.}
\begin{itemize} \itemsep 1pt
\item[] {\sc Step 1:} Choose an arbitrary subset of $i+j+1$ elements from the given $n$.

\item[] {\sc Step 2:} Select and return the $(i+1)$-th largest element from the chosen subset.
\end{itemize}

As mentioned earlier, it is easy to check that the element output by Yao's algorithm is
$(i,j)$-mediocre. Our algorithm (for the specified range) is also simple: 

\medskip
\noindent{\bf Algorithm~A1.} 
\begin{itemize} \itemsep 1pt
\item[] {\sc Step 1:} Choose an arbitrary subset of $2i+j+1$ elements from the given $n$
and group them into $m= i + \lfloor \frac{j+1}{2} \rfloor $ pairs and at most one leftover element
(if $j$ is even).

\item[] {\sc Step 2:} Perform the $m$ comparisons and include the largest from each pair in a pool
of $m$ elements. Add the leftover element, if any, to the pool.

\item[] {\sc Step 3:} Select and return the $(i+1)$-th largest from this pool.
  
\end{itemize}

\begin{figure}[htbp]
\centering
\includegraphics[scale=0.85]{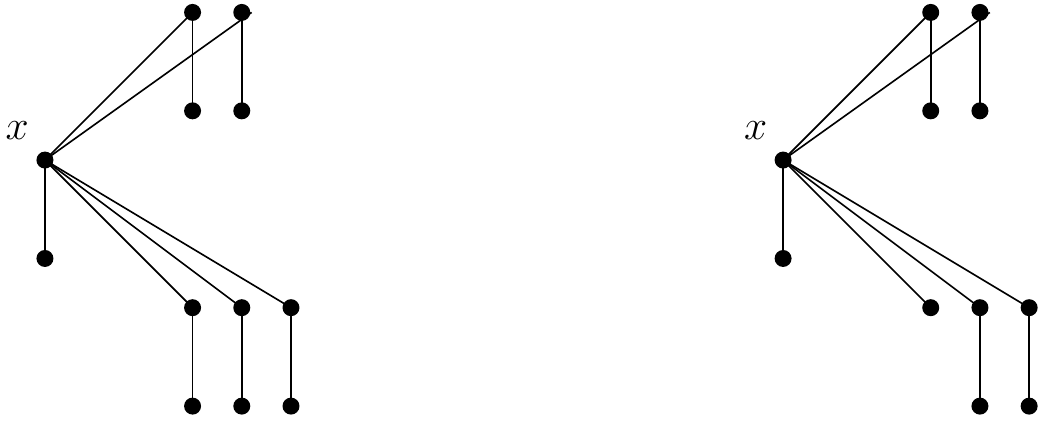}
\caption{Left: Illustration of Algorithm~A1 for selecting a $(2,7)$-mediocre element
  out of $n=12$ elements (left) and a $(2,6)$-mediocre element out of $n=11$ elements (right). 
  Large elements are at the top of the respective edges.}
\label{fig:f1}
\end{figure}

Let us briefly argue about its correctness and refer to Fig.~\ref{fig:f1}.
Denote the selected element by $x$. Assume first that $n$ is even.
On one hand, $x$ is smaller than $i$ (upper) elements in disjoint pairs;
on the other hand, $x$ is larger than $n-2i-1 \geq j$ (upper and lower) elements in disjoint pairs
by the range assumption. The argument is similar for odd $n$; it turns out that the final place
in the poset diagram where the singleton element ends up is irrelevant.
It follows that the algorithm returns an $(i,j)$-mediocre element, as required.

It should be noted that both algorithms (ours as well as Yao's)
make calls to exact selection, however with different input parameters.
As such, we use the current best deterministic algorithm and corresponding
worst-case bound for (exact) selection available.
In particular, selecting the median can be accomplished with at most $2.95 n$ 
comparisons, by using the algorithm of Dor and Zwick~\cite{DZ99}.

\paragraph{The algorithm of Dor and Zwick.}
Consider the problem of selecting the $\alpha n$-th largest element out of given $n$.
By symmetry one may assume that $0 < \alpha \leq 1/2$. 
If $l \geq 0$ is any fixed integer, this task can be accomplished with at most
$c_{\rm Dor-Zwick} \cdot n + o(n)$ comparisons, where 
\begin{equation} \label{eq:alpha,l}
c_{\rm Dor-Zwick} = c_{\rm Dor-Zwick} (\alpha,l) = 1 +(l+2) \left( \alpha + \frac{1-\alpha}{2^l} \right),
\end{equation}
by using an algorithm due to the same authors~\cite{DZ96}.
By letting
$l =\lfloor \log{\frac1\alpha} + \log{\log{\frac1\alpha}} \rfloor$ in
Equation~\eqref{eq:alpha,l}, the authors obtain the following upper bound:
\begin{align} 
c_{\rm Dor-Zwick} (\alpha)  &\leq 1 + \left(\log{\frac1\alpha} + \log{\log{\frac1\alpha}} +2 \right)
\cdot \left( \alpha + \frac{2\alpha(1-\alpha)}{\log{\frac1\alpha}} \right) \label{eq:alpha}\\
&= 1 + \alpha \log{\frac1\alpha} + \alpha \log \log {\frac1\alpha} + O(\alpha). \nonumber
\end{align}
Note that Equations~\eqref{eq:alpha,l} and~\eqref{eq:alpha} only lead to upper bounds
in asymptotic terms.

Next we show that algorithm A1 outperforms Yao's algorithm for finding an
$(\alpha n, \beta n)$-mediocre element
for large $n$ and for a broad range of values of $\alpha$ and suitable $\beta=\beta(\alpha)$,
when using a slightly modified version of the algorithm of Dor and Zwick.
A key difference between our algorithm and Yao's lies in the amount of effort put into
processing the input. Whereas Yao's algorithm chooses an arbitrary subset of elements of
a certain size and ignores the remaining elements, our algorithm looks at more (possibly all) 
input elements and gathers initial information based on grouping the elements into disjoint pairs
and performing the respective comparisons.

\paragraph{Fine-tuning the algorithm of Dor and Zwick.}
For $0<\alpha \leq 1/2$, let $f(\alpha)$ denote the multiplicative constant in
the current best upper bound on the number of comparisons in the algorithm
of Dor and Zwick for selection of the $\alpha n$-th largest
element out of $n$ elements, according to~\eqref{eq:alpha,l}, with one improvement.
Instead of considering only one value for $l$,
namely $l=\lfloor \log{\frac1\alpha} + \log{\log{\frac1\alpha}} \rfloor$,
we also consider the value $l+1$,
and let the algorithm choose the best (\ie, the smallest of the two resulting values
in~\eqref{eq:alpha,l} for the number of comparisons in terms of $\alpha$).
This simple change yields an advantage of Algorithm~A1 over Yao's algorithm for an
extended range of inputs. (Without this change Yao's algorithm wins over A1 for some
small $\alpha$.) 

We first note that the algorithm of Dor and Zwick~\cite{DZ96},
which is a refinement of the algorithm of Sch\"{o}nhage~\etal~\cite{SPP76}, is non-recursive.
As such, the selection target remains the same during its execution, and so choosing the best value
for $l$ can be done at the beginning of the algorithm. (Recall that the seminal algorithm of
Sch\"{o}nhage~\etal~\cite{SPP76} is non-recursive as well.)

To be precise, let
\begin{equation}
  g(\alpha,l) = \left(1 +(l+2) \left( \alpha + \frac{1-\alpha}{2^l} \right) \right). \label{eq:g}
\end{equation}

For a given $0<\alpha \leq 1/2$, let
\begin{align}
  l&=\left\lfloor \log{\frac1\alpha} + \log{\log{\frac1\alpha}} \right\rfloor, \label{eq:l}\\
  f(\alpha) &= \min\left(g(\alpha,l),g(\alpha,l+1)\right). \label{eq:f}
\end{align}

\paragraph{Problem instances and analysis of the number of comparisons.}
Consider the instance $(\alpha n, (1-2\alpha)n -1)$ of the problem of selecting 
a mediocre element, where $\alpha$ is a constant $0<\alpha<1/3$.
The comparison counts for Algorithm~A1 and Algorithm~Yao on this instance are bounded from above
by $c_{\rm A1} \cdot n + o(n)$ and $c_{\rm Yao} \cdot n + o(n)$, respectively, where
%
 %  alg=0.5+0.5*f(2*x);
 %  yao=(1.-x)*f(x/(1.-x));
%
\begin{align}
c_{\rm A1} &= \frac12 \left( 1 + f(2\alpha) \right), \label{eq:alg} \\
c_{\rm Yao} &= (1-\alpha) \cdot f\left(\frac{\alpha}{1-\alpha} \right). \label{eq:yao} 
\end{align}

Indeed, Algorithm~A1 performs $\alpha n + (n/2 -\alpha n)= n/2$ initial comparisons
followed by a selection problem with a fraction $\alpha'=2 \alpha$ from the $n/2$ available.
The element returned by Yao's algorithm corresponds to a selection problem
with a fraction $\alpha'=\frac{\alpha}{1-\alpha}$ from the $(1-\alpha)n$ available.

Since Dor and Zwick~\cite{DZ99} managed to reduce the $3n +o(n)$ comparison bound
to about $2.95 n$, the expression of $f(\alpha)$ in~\eqref{eq:f} can be replaced by
\begin{equation} \label{eq:f1}
  f(\alpha) = \min\left(g(\alpha,l),g(\alpha,l+1),3\right), 
 \end{equation}
or even by
\begin{equation} \label{eq:f2}
  f(\alpha) = \min\left(g(\alpha,l),g(\alpha,l+1),2.95\right). 
 \end{equation}

We next show that Algorithm~A1 outperforms Algorithm~Yao with respect to the
(worst-case) number of comparisons in selecting a mediocre element for $n$ large enough and
for all  instances $(\alpha_s n, (1-2\alpha_s)n -1)$,
where $\alpha_s =s/100$, and $s=1,\ldots,33$; that is, for all percentiles  $s=1,\ldots,33$
and suitable values of the second parameter. This is proven by the data in the two tables
in Fig.~\ref{fig:tables}, where the entries are computed using Equations~\eqref{eq:alg}
and~\eqref{eq:yao}, respectively. Moreover, the results remain the same, regardless
of whether one uses the expression of $f(\alpha)$ in~\eqref{eq:f1} or~\eqref{eq:f2};
to avoid the clutter, we only included the results obtained by using  the expression
of $f(\alpha)$ in~\eqref{eq:f1}.

Note that the computation in~\eqref{eq:alg} may need the value of $f$ for arguments $x>1/2$;
and such values are computed from $f(x)=f(1-x)$ by the symmetry assumption
(\eg, $f(0.6)=f(0.4)=3$, $f(0.8)=f(0.2)=2.5$). 
The functions $c_{\rm A1}(\alpha)$ and $c_{\rm Yao}(\alpha)$  
for $\alpha \in (0,1/3)$, as given by~\eqref{eq:alg} and~\eqref{eq:yao}, are plotted in
Fig~\ref{fig:f2}; however, a proof that  $c_{\rm A1}(\alpha) < c_{\rm Yao}(\alpha)$ on this interval
is missing. 
\smallskip

\begin{figure}[htbp]
%%%%%%%%%%%%%%%
\begin{minipage}[b]{0.61\linewidth}
\centering

{\small
  \begin{tabular}{ccccc} \toprule
$\alpha_s$ & $l$ & $g(\alpha_s,l)$ & $g(\alpha_s,l+1)$ & $f(\alpha_s)$ \\ \midrule
$0.01$ & $9$ & $1.1312$ & $1.1316$ & $1.1312$ \\ 
$0.02$ & $8$ & $1.2382$ & $1.2410$ & $1.2382$ \\ 
$0.03$ & $7$ & $1.3382$ & $1.3378$ & $1.3378$ \\ 
$0.04$ & $6$ & $1.4400$ & $1.4275$ & $1.4275$ \\ 
$0.05$ & $6$ & $1.5187$ & $1.5168$ & $1.5168$ \\ 
$0.06$ & $6$ & $1.5975$ & $1.6060$ & $1.5975$ \\ 
$0.07$ & $5$ & $1.6934$ & $1.6762$ & $1.6762$ \\ 
$0.08$ & $5$ & $1.7612$ & $1.7550$ & $1.7550$ \\ 
$0.09$ & $5$ & $1.8290$ & $1.8337$ & $1.8290$ \\ 
$0.10$ & $5$ & $1.8968$ & $1.9125$ & $1.8968$ \\ 
$0.11$ & $4$ & $1.9937$ & $1.9646$ & $1.9646$ \\ 
$0.12$ & $4$ & $2.0500$ & $2.0325$ & $2.0320$ \\ 
$0.13$ & $4$ & $2.1062$ & $2.1003$ & $2.1003$ \\ 
$0.14$ & $4$ & $2.1625$ & $2.1681$ & $2.1625$ \\ 
$0.15$ & $4$ & $2.2187$ & $2.2359$ & $2.2187$ \\ 
$0.16$ & $4$ & $2.2750$ & $2.3037$ & $2.2750$ \\ 
$0.17$ & $3$ & $2.3687$ & $2.3312$ & $2.3312$ \\ 
$0.18$ & $3$ & $2.4125$ & $2.3875$ & $2.3875$ \\ 
$0.19$ & $3$ & $2.4562$ & $2.4437$ & $2.4437$ \\ 
$0.20$ & $3$ & $2.5000$ & $2.5000$ & $2.5000$ \\ 
$0.21$ & $3$ & $2.5437$ & $2.5562$ & $2.5437$ \\ 
$0.22$ & $3$ & $2.5875$ & $2.6125$ & $2.5875$ \\ 
$0.23$ & $3$ & $2.6312$ & $2.6687$ & $2.6312$ \\ 
$0.24$ & $3$ & $2.6750$ & $2.7250$ & $2.6750$ \\ 
$0.25$ & $2$ & $2.7500$ & $2.7187$ & $2.7187$ \\ 
$0.26$ & $2$ & $2.7800$ & $2.7625$ & $2.7625$ \\ 
$0.27$ & $2$ & $2.8100$ & $2.8062$ & $2.8062$ \\ 
$0.28$ & $2$ & $2.8400$ & $2.8500$ & $2.8400$ \\ 
$0.29$ & $2$ & $2.8700$ & $2.8937$ & $2.8700$ \\ 
$0.30$ & $2$ & $2.9000$ & $2.9375$ & $2.9000$ \\ 
$0.31$ & $2$ & $2.9300$ & $2.9812$ & $2.9300$ \\ 
$0.32$ & $2$ & $2.9600$ & $3.0250$ & $2.9600$ \\ 
$0.33$ & $2$ & $2.9900$ & $3.0687$ & $2.9900$ \\  \bottomrule
\end{tabular}
}

 \end{minipage}
%%%%%%%%%%%%%%%%
\hspace{0.04\linewidth}
%%%%%%%%%%%%%%%%%
\begin{minipage}[b]{0.3\linewidth}
\centering

{\small
  \begin{tabular}{ccc} \toprule
$\alpha_s$ & $c_{\rm A1}(\alpha_s)$  & $c_{\rm Yao}(\alpha_s)$ \\ \midrule
$0.01$ & $1.1191$ & $1.1210$ \\ 
$0.02$ & $1.2137$ & $1.2175$ \\ 
$0.03$ & $1.2987$ & $1.3069$ \\ 
$0.04$ & $1.3775$ & $1.3846$ \\ 
$0.05$ & $1.4484$ & $1.4625$ \\ 
$0.06$ & $1.5162$ & $1.5300$ \\ 
$0.07$ & $1.5812$ & $1.5975$ \\ 
$0.08$ & $1.6375$ & $1.6637$ \\ 
$0.09$ & $1.6937$ & $1.7193$ \\ 
$0.10$ & $1.7500$ & $1.7750$ \\ 
$0.11$ & $1.7937$ & $1.8306$ \\ 
$0.12$ & $1.8375$ & $1.8850$ \\ 
$0.13$ & $1.8812$ & $1.9275$ \\ 
$0.14$ & $1.9200$ & $1.9700$ \\ 
$0.15$ & $1.9500$ & $2.0125$ \\ 
$0.16$ & $1.9800$ & $2.0550$ \\ 
$0.17$ & $2.0000$ & $2.0925$ \\ 
$0.18$ & $2.0000$ & $2.1200$ \\ 
$0.19$ & $2.0000$ & $2.1475$ \\ 
$0.20$ & $2.0000$ & $2.1750$ \\ 
$0.21$ & $2.0000$ & $2.2025$ \\ 
$0.22$ & $2.0000$ & $2.2200$ \\ 
$0.23$ & $2.0000$ & $2.2300$ \\ 
$0.24$ & $2.0000$ & $2.2400$ \\ 
$0.25$ & $2.0000$ & $2.2500$ \\ 
$0.26$ & $2.0000$ & $2.2200$ \\ 
$0.27$ & $2.0000$ & $2.1900$ \\ 
$0.28$ & $2.0000$ & $2.1600$ \\ 
$0.29$ & $2.0000$ & $2.1300$ \\ 
$0.30$ & $2.0000$ & $2.1000$ \\ 
$0.31$ & $2.0000$ & $2.0700$ \\ 
$0.32$ & $2.0000$ & $2.0400$ \\ 
$0.33$ & $2.0000$ & $2.0100$ \\ \bottomrule
\end{tabular}
}

\end{minipage}
%%%%%%%%%%%%%%%%%%
\caption{Left: the values of $f(\alpha_s)$, $\alpha_s =s/100$, $s=1,\ldots,33$, for the algorithm
  of Dor and Zwick. Note that $f(\alpha)=3$ for $1/3 \leq \alpha \leq 1/2$. 
  Right: the comparison counts per element of A1 versus Yao on instances
  $(\alpha_s n, (1-2\alpha_s)n -1)$, where $\alpha_s =s/100$ and $s=1,\ldots,33$
  (rounded to four decimals).}
\label{fig:tables}
\end{figure}
%

% 
%\medskip
\begin{figure}[htbp]
\centering
\includegraphics[scale=0.99]{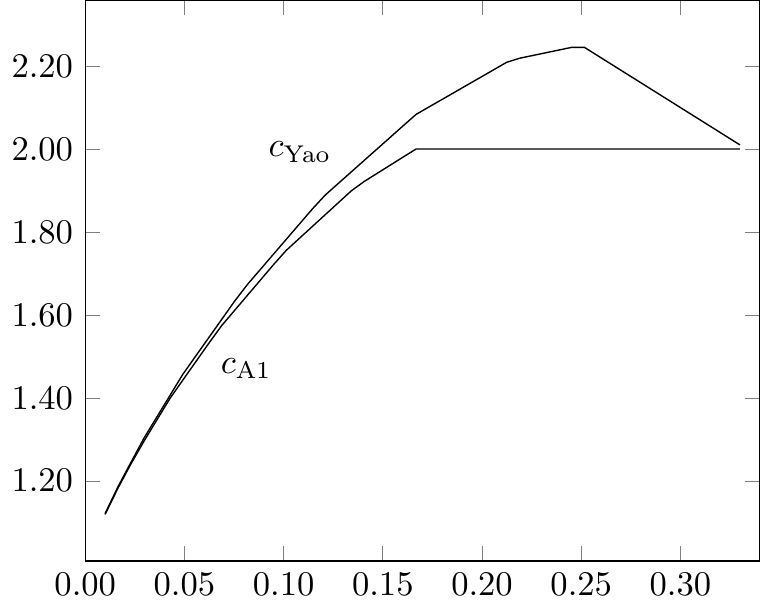}
\caption{$c_{\rm A1}$ vs. $c_{\rm Yao}$.}
\label{fig:f2}
\end{figure}

\paragraph{Extension to hyperpairs.}
It turns out that Algorithm A1 is not a singular case but a member of larger family.
Instead of working with pairs,  a more general scheme  that works with hyperpairs
(algorithm A below) can be designed. (See~\cite{CD20,DZ96,SPP76} for other uses
of hyperpairs in selection.)
The algorithm picks a group size that is a power of $2$
and divides the elements into groups of that size. It then computes the largest in each group
by a tournament method and then calls a suitable selection procedure on these set of largest
group elements. We omit a formal algorithm description but provide a figure instead
(Fig.~\ref{fig:f3}). 
\begin{figure}[htbp]
\centering
\includegraphics[scale=0.85]{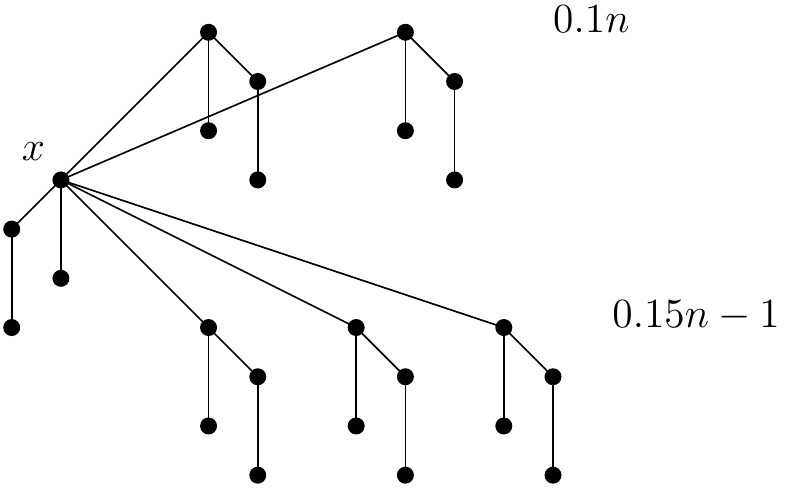}
\caption{Left: Illustration of Algorithm~A working with groups of $4$
  for selecting a $(2,15)$-mediocre element out of $n=24$ elements.
  For large $n$ this scheme finds a $(0.1 n, 0.6 n -1)$-mediocre element out of $n$.}
\label{fig:f3}
\end{figure}

Consider the instance $(\alpha n, (1-4\alpha)n -1)$ of the problem of selecting 
a mediocre element, where $\alpha$ is a constant $0<\alpha \leq 1/5$.
Suppose that Algorithm~A works with groups of $4$ elements.
Analogous to~\eqref{eq:alg} and~\eqref{eq:yao}, 
the comparison counts for Algorithm~A and Algorithm~Yao on this instance are bounded from above
by $c_{\rm A} \cdot n + o(n)$ and $c_{\rm Yao} \cdot n + o(n)$, respectively, where
\begin{align}
c_{\rm A} &= \frac14 \left( 3 + f(4\alpha) \right), \label{eq:h-alg} \\
c_{\rm Yao} &= (1-3 \alpha) \cdot f\left(\frac{\alpha}{1- 3 \alpha} \right). \label{eq:h-yao} 
\end{align}

We have $c_{\rm A}< c_{\rm Yao}$ for each of the percentiles $9$ through $16$
  (\ie, $\alpha_s =s/100$, $s=9,\ldots,16$). 
For example: if $\alpha=0.1$, a $(0.1n , 0.6n -1)$-mediocre element is desired;
then $c_{\rm A}=1.5$ and $ c_{\rm Yao}=1.525$.
 If $\alpha=0.13$, a $(0.13n , 0.48n -1)$-mediocre element is desired;
then $c_{\rm A}=1.5$ and $ c_{\rm Yao}=1.56$.
These results are obtained by using  the expression of $f(\alpha)$ in~\eqref{eq:f1}.

\section{Randomized approximate selection} \label{sec:randomized}

Consider the problem of selecting an $(i,j)$-mediocre element where $i+j + 2(i+j)^{3/4} \leq n$
and $i+j=\omega(1)$. In particular, assume that $i+j \geq 16$. 

\paragraph{Algorithms.}
We next specify our algorithm and then compare it with Yao's algorithm running on the same input. 
Our algorithm relies on random sampling and is similar to the Floyd-Rivest randomized algorithm
for selection~\cite{FR75}. For example, if $i=j=n/2 - n^{3/4}$,  an element in the near vicinity
of the median is sought. In this case, the algorithm also resembles the strategies used by
quicksort algorithms that attempt to optimize the sample size for obtaining balanced
partitions~\cite{MR01,MT95}. 

\bigskip
\noindent{\bf Algorithm~A2.} \\
\textbf{Input:} A set $S$ of $n$ elements over a totally ordered universe and a pair $i,j$
where $i+j +2(i+j)^{3/4} \leq n$.\\
\textbf{Output:} An $(i,j)$-mediocre element.

\begin{itemize} \itemsep 1pt
\item[] {\sc Step 0:} Choose an arbitrary subset $S'$ of $m:=i+j +2(i+j)^{3/4}$ elements from
  the given $n$. (Note that $i+j \geq m/2$ by the assumption.)
  
\item[] {\sc Step 1:} Pick a (multi)-set $R$ of $m^{3/4}$ elements in $S'$, chosen
  uniformly and independently at random with replacement. 

\item[] {\sc Step 2:} Let $k =j m^{-1/4} + m^{1/2}/2$.
  Let $x$ be the $k$-th smallest element of $R$,
 computed by a linear-time deterministic selection algorithm.

\item[] {\sc Step 3:} Compare each of the remaining elements of $S' \setminus R$ to $x$.

\item[] {\sc Step 4:} If there are at least $i$ elements of $S'$ larger than $x$
  and at least $j$ elements of $S'$ smaller than $x$ return $x$, otherwise FAIL. 
  
\end{itemize}

\smallskip
\noindent Since $i+j \geq 16$ we have $2 (i+j)^{3/4} \leq i+j$ or $i+j \geq m/2$ and thus
\[ 2 (i+j)^{3/4} \geq 2 \left( \frac{m}{2} \right)^{3/4} \geq m^{3/4}. \]
In particular $j$ is bounded from above as follows
\[ j= m- i - 2 (i+j)^{3/4} \leq m - i - m^{3/4}. \]
Note that $k$ in {\sc Step 2} satisfies
\[ m^{1/2}/2 \leq k \leq (m-2(i+j)^{3/4})m^{-1/4} + m^{1/2}/2 \leq m^{3/4} - m^{1/2}/2. \]
Observe that (i)~Algorithm A2 performs at most $m + O(m^{3/4})$ comparisons;
and (ii)~it either correctly outputs an $(i,j)$-mediocre element or FAIL. 

\paragraph{Analysis of the number of comparisons.}
Our analysis is an adaptation of that of  the classic randomized algorithm for selection;
see~\cite{FR75}, but also~\cite[Sec.~3.3]{MR95} and~\cite[Sec.~3.4]{MU05}. In particular,
the randomized selection algorithm and Algorithm~A2 both fail for similar reasons.

For $r=1,\ldots,m^{3/4}$, define random variables $X_r$ and $Y_r$ by
\begin{align*}
X_r &=
\left\{ \begin{array}{ll}
1 & \text{if the rank in } S' \text{ of the } r\text{th sample is at most } j, \\
0 & \text{else}. \end{array} \right.\\
Y_r &=
\left\{ \begin{array}{ll}
1 & \text{if the rank in } S' \text{ of the } r\text{th sample is at least } m-i+1, \\
0 & \text{else}. \end{array} \right.
\end{align*}

The variables $X_r$ and $Y_r$ are independent, since the sampling is done with replacement.
It is easily seen that
\[ p:=\Prob(X_r=1) =\frac{j}{m} \text{ and } q:=\Prob(Y_r=1) =\frac{i}{m}. \]

Let $X =\sum_{r=1}^{m^{3/4}} X_r$ and $Y =\sum_{r=1}^{m^{3/4}} Y_r$ 
be the random variables counting the number of samples in $R$ of rank at most $j$
and at least $m-i+1$, respectively.
By the linearity of expectation, we have
\begin{align*}
\E[X] &=\sum_{r=1}^{m^{3/4}} \E[X_r] = m^{3/4} p = j m^{-1/4}, \\
\E[Y] &=\sum_{r=1}^{m^{3/4}} \E[Y_r] = m^{3/4} q = i m^{-1/4}.
\end{align*}

Observe that the randomized algorithm A2 fails if and only if the rank of $x$ in $S'$
is outside the interval $[j+1,m-i]$, \ie,
the rank of $x$ is at most $j$ or at least $m-i+1$.
Note that if algorithm A2 fails then at least
$ j m^{-1/4} + m^{1/2}/2$ elements of $R$ have rank at most $j$ or
at least
\begin{align*}
|R| - k &= m^{3/4} - (j m^{-1/4} + m^{1/2}/2) \geq m^{3/4} - (m- m^{3/4} -i)m^{-1/4} - m^{1/2}/2 \\
&= i m^{-1/4} + m^{1/2}/2
\end{align*}
elements of $R$ have rank at least $m-i+1$.
Denote these two bad events by $E_1$ and $E_2$, respectively.
We next bound from above their probability.
(Sharper bounds on the failure probability can be obtained by using Chernoff
bounds~\cite[Ch.~4]{MU05}; however, they do not affect the asymptotics of our algorithm.)

\begin{lemma} \label{lem:FAIL}
\[ \Prob(E_1), \Prob(E_2) \leq m^{-1/4}. \]
\end{lemma}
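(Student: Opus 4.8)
The plan is to bound $\Prob(E_1)$ and $\Prob(E_2)$ using Chebyshev's inequality~\eqref{eq:C} applied to $X$ and $Y$ respectively. Consider $E_1$ first. The event $E_1$ says that at least $j m^{-1/4} + m^{1/2}/2$ of the samples have rank at most $j$, i.e., $X \geq \E[X] + m^{1/2}/2$ since $\E[X] = j m^{-1/4}$. Hence $E_1 \subseteq \{\,|X - \E[X]| \geq m^{1/2}/2\,\}$, and Chebyshev gives $\Prob(E_1) \leq 4\Var[X] / m$. Since $X$ is a sum of $m^{3/4}$ independent Bernoulli$(p)$ variables with $p = j/m$, we have $\Var[X] = m^{3/4} p (1-p) \leq m^{3/4}/4$. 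Plugging in yields $\Prob(E_1) \leq 4 \cdot (m^{3/4}/4) / m = m^{-1/4}$, as claimed.

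The argument for $E_2$ is entirely symmetric: $E_2$ says at least $i m^{-1/4} + m^{1/2}/2$ samples have rank at least $m-i+1$, i.e. $Y \geq \E[Y] + m^{1/2}/2$, so $E_2 \subseteq \{\,|Y - \E[Y]| \geq m^{1/2}/2\,\}$. Using $\Var[Y] = m^{3/4} q(1-q) \leq m^{3/4}/4$ with $q = i/m$, Chebyshev again gives $\Prob(E_2) \leq m^{-1/4}$. The only point requiring a word of care is that I must first justify the inclusions $E_1 \subseteq \{|X - \E[X]| \geq m^{1/2}/2\}$ and $E_2 \subseteq \{|Y - \E[Y]| \geq m^{1/2}/2\}$ — i.e., that the threshold in the definition of $E_1$ (resp.\ $E_2$) really is $\E[X] + m^{1/2}/2$ (resp.\ $\E[Y] + m^{1/2}/2$) and not something smaller, which would break the bound. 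But this is immediate from the computations $\E[X] = j m^{-1/4}$ and $\E[Y] = i m^{-1/4}$ already recorded in the text, together with the lower bounds on $|R| - k$ derived just above the lemma statement.

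There is essentially no obstacle here — the lemma is a routine second-moment estimate. The one structural subtlety worth flagging is that Chebyshev controls a two-sided deviation $|X - \E[X]| \geq a$, whereas $E_1$ and $E_2$ are one-sided events; this costs nothing, since we simply bound the one-sided event by the two-sided one. (A Chernoff bound would give an exponentially small probability and hence a sharper estimate, but as the paper notes this would not change the asymptotic $m + O(m^{3/4})$ comparison count, so the cruder Chebyshev bound suffices.)
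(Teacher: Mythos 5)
Your proposal is correct and follows exactly the paper's own argument: identify $E_1$ (resp.\ $E_2$) as a one-sided deviation of the binomial variable $X$ (resp.\ $Y$) from its mean by $m^{1/2}/2$, bound the variance by $m^{3/4}/4$, and apply Chebyshev's inequality to get $m^{-1/4}$. No differences worth noting.
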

\begin{proof}
  Since $X_r$ is a Bernoulli trial, $X$ is a binomial random variable with
  parameters $m^{3/4}$ and $p$. Similarly $Y$ is a binomial random variable
  with parameters $m^{3/4}$ and $q$. 

  Observing that $x(1-x) \leq 1/4$ for every $x \in [0,1]$, it follows
  (see for instance~\cite[Sec.~3.2.1]{MU05}) that
\begin{align*}
\Var(X) &= m^{3/4} p(1-p) \leq m^{3/4}/4, \\
\Var(Y) &=m^{3/4} q(1-q) \leq m^{3/4}/4.
\end{align*}

Applying Chebyshev's inequality \eqref{eq:C} to $X$ yields
\begin{align*}
  \Prob(E_1) &\leq \Prob \left(X \geq j m^{-1/4} + m^{1/2}/2 \right) 
  \leq \Prob\left(|X -\E[X]| \geq m^{1/2}/2 \right) \\
  &\leq \frac{\Var(X)}{m/4} \leq \frac{m^{3/4}/4}{m/4} = m^{-1/4}.
\end{align*}

Similarly, applying Chebyshev's inequality \eqref{eq:C}  to $Y$ yields
\begin{align*}
  \Prob(E_2) &\leq \Prob\left(Y \geq i m^{-1/4} + m^{1/2}/2 \right) 
  \leq \Prob\left(|Y -\E[Y]| \geq m^{1/2}/2 \right) \\
  &\leq \frac{\Var(Y)}{m/4} \leq \frac{m^{3/4}/4}{m/4} = m^{-1/4}.
\end{align*}
The two inequalities have been proved.
\end{proof}

By the union bound, the probability that one execution of Algorithm~A2 fails
is bounded from above by
\[ \Prob(E_1 \cup E_2) \leq \Prob(E_1) + \Prob(E_2) \leq 2 m^{-1/4}. \]

As in~\cite[Sec~3.4]{MU05}, Algorithm~A2 can be converted (from a Monte Carlo algorithm)
to a Las Vegas algorithm by running it repeatedly until it succeeds. 
By Lemma~\ref{lem:FAIL}, the FAIL probability is significantly small,
and so the expected number of comparisons of the resulting algorithm is still $m + o(m)$.
Indeed, the expected number of repetitions until the algorithm succeeds is
at most
\[ \frac{1}{1-2 m^{-1/4}} \leq 1 + O\left(m^{-1/4}\right). \]
Since the number of comparisons in each execution of the algorithm is
$m + O\left(m^{3/4}\right)$, the expected number of comparisons  until success
is at most
\[ \left( 1 + O\left(m^{-1/4}\right) \right) \left( m + O\left(m^{3/4} \right) \right) =
m + O(m^{3/4}). \]

\medskip
We now analyze the average number of comparisons done by Yao's algorithm.
On one hand, the $k$-th largest element out of $n$ given can be found using at most
$n + \min(k,n-k) + o(n)$ comparisons on average~\cite{FR75}.
On the other hand, this task requires $n + \min(k,n-k) + o(n)$ comparisons
on average~\cite{CM89}. Consequently, Yao's algorithm performs $i+j + \min(i,j) + o(i+j)$
comparisons on average.

\paragraph{Comparison.}
Consider the problem of selecting an $(i,j)$-mediocre element where $i+j +$ \linebreak
$2 (i+j)^{3/4} \leq n$
and $i+j \geq \left( \frac23 + \eps \right) n$ for some constant $\eps>0$.  
Algorithm~A2 performs $n +o(n)$ comparisons on average. If $i \approx j$, Yao's algorithm
performs $\frac32 \left( \frac23 + \eps \right) n + o(n) = n + \frac32 \eps n + o(n)$ 
comparisons on average, strictly more than Algorithm~A2 for large $n$. 

For example, let $i=j=n/2 - n^{3/4}$. Whereas  Algorithm~A2 performs $n +o(n)$ comparisons
on average, Yao's algorithm performs $3n/2 + o(n)$ comparisons on average. Indeed, 
the median of $i+j+1=n - 2n^{3/4} +1$ elements can be found in at most $3n/2 + o(n)$
comparisons on average; and the main term in this expression cannot be improved. 

Let $\alpha,\beta>0$ be two constants, where $\alpha+\beta<1$. Algorithm A2 can find
an $(\alpha n, \beta n)$-mediocre element out of $n$, for large $n$,
in $n +o(n)$ comparisons on average, whereas Yao's algorithm performs
$(\alpha + \beta)n + \min(\alpha, \beta) n + o(n)$ comparisons on average.
If $\alpha \approx \beta$ and $\alpha + \beta$ is close to $1$ a significant $33\%$
savings results.

\section {Lower bounds} \label{sec:lower-bounds}

We compute lower bounds by leveraging the work of Sch\"onhage on a related problem,
namely partial order production. In the \emph{partial order production} problem,
we are given a poset $P$ partially ordered by $\preceq_1$,
and another set $S$ of $n$ elements with an underlying, unknown,
total order $\preceq_2$; with $|P| \leq |S|$. The goal is to find a monotone injection
from $P$ to $S$ by querying the total order $\preceq_2$ and minimizing the number
of such queries. Alternatively, the partial order production problem can be
(equivalently) formulated with $|P|=|S|$, by padding $P$ with $|S|-|P|$ singleton elements.

This problem was first studied by Sch\"{o}nhage~\cite{Sch76},
who showed by an information-theoretic argument that $C(P) \geq \lceil \log (n!/e(P)) \rceil$, 
where $C(P)$ is the \emph{minimax comparison complexity} of $P$ and 
$e(P)$ is the number of linear extensions (\ie, total orders) of $P$.
Further results on poset production were obtained by Aigner~\cite{Ai81}.
A.~C.-C.~Yao~\cite{Yao89} proved that Sch\"{o}nhage's lower bound can be achieved asymptotically
in the sense that $C(P)= O(\log (n!/e(P)) + n)$, confirming a conjecture of Saks~\cite{Sa85}.

Finding an $(i,j)$-mediocre element amounts to a special case of the partial order production problem,
where $P$ consists of a center element, $i$ elements above it, and $j$ elements below it. 
For applying Sch\"{o}nhage's lower bound we have
\[ e(P) = i! j! (n-i-j-1)! {n \choose i+j+1}. \]
This yields 
\begin{equation} \label{eq:lb-ij}
C(P) \geq \bigg \lceil \log(n!) - \left( \log(n!) + \log{\frac{i!j!}{(i+j+1)!}}\right) \bigg \rceil=
\bigg \lceil \log {\frac{(i+j+1)!}{i!j!}} \bigg \rceil.
\end{equation}

Interestingly enough, the resulting lower bound does not depend on $n$; observe here
the connection with Yao's hypothesis, namely the question on the independence of $S(i,j,n)$ on $n$
mentioned in Section~\ref{sec:intro}. Moreover, since
\[ \frac{(i+j+1)!}{i!j!}= (j+1) {i+j+1 \choose i} =(i+1) {i+j+1 \choose j}, \]
the above lower bound is rather weak, namely $i+j - o(i+j)$.
We remark that this is not unusual for selection problems and note that a lower bound of
$(i+j+1)-1=i+j<n$ for selecting an $(i,j)$-mediocre element is immediate by a connectivity
argument applied to $P$; see also~\cite[Eq.~(1)]{FR75} and~\cite[Lemma 2]{Yao89}.
On the other hand, observe that the coefficients of the linear terms in the upper bounds
in the right table in Fig.~\ref{fig:tables} are all strictly greater than $1$. 

The situation is similar for randomized algorithms but only in part.
Sch\"{o}nhage's lower bound on the minimax comparison complexity of $P$
in the problem of partial order production was extended to \emph{minimean comparison complexity}
by A.~C.-C.~Yao~\cite{Yao89}.
Denoting this complexity by $\overline{C}(P) $, he showed that
$\overline{C}(P)  \geq \lceil \log (n!/e(P)) \rceil$. As such, the lower bound in~\eqref{eq:lb-ij}
holds for randomized algorithms as well. On the other hand, the trivial lower bound $i+j$ mentioned
previously also holds. Consequently, the upper bound in Theorem~\ref{thm:randomized} is optimal
up to lower order terms.

\section{Conclusion} \label{sec:conclusion}

In Sections~\ref{sec:deterministic} and~\ref{sec:randomized}
we presented two alternative algorithms---one deterministic and one randomized---for finding
a mediocre element, \ie, for approximate selection. 

The deterministic algorithm outperforms Yao's algorithm for large $n$ with respect to the worst-case
number of comparisons for about one third of the percentiles (as the first parameter),
and suitable values of the second parameter, using the best known complexity bounds for
exact selection due to Dor and Zwick~\cite{DZ96}.
Moreover, we suspect that this extends to the entire range of
$\alpha \in (0,1/3)$ and suitable $\beta=\beta(\alpha)$ in the problem of selecting an
$(\alpha n, \beta n)$-mediocre element for large $n$. Whether Yao's algorithm can be beaten
by a deterministic algorithm in the symmetric case $i=j$ remains an interesting question. 

The randomized algorithm outperforms Yao's algorithm for large $n$ with respect to the expected 
number of comparisons for the entire range of $\alpha \in (0,1/2)$ in the problem
of finding an $(\alpha n, \alpha n)$-mediocre element for large $n$. As shown in
Section~\ref{sec:randomized}, these ideas can be also used to generate asymmetric
instances (\ie, with $i \neq j$) with a gap.  

\paragraph{Acknowledgments.} The author thanks Jean Cardinal for stimulating discussions on the
topic. In particular, the idea of examining the existent lower bounds for the partial order production
problem is due to him. The author is also grateful to an anonymous reviewer for
constructive comments and exquisite attention to detail. Finally, thanks go to another 
anonymous reviewer for suggesting that our deterministic algorithm is not a singular example;
the extension of the algorithm described at the end of Section~\ref{sec:deterministic} was
inspired by this suggestion.

\end{document}